\begin{document}
\newtheorem{definition}{Definition}
\newtheorem{notation}{Notation}
\newtheorem{corollary}{Corollary}
\newtheorem{lemma}{Lemma}
\newtheorem{theorem}{Theorem}
\title{Unitary Noise and the Mermin-GHZ Game}
\author{Ivan Fial\'ik \thanks{This work has been supported by the research project MSM0021622419.}
\institute{Faculty of Informatics\\
Masaryk University\\
Brno, Czech Republic}
\email{60488@mail.muni.cz}
}
\def\titlerunning{Unitary Noise and the Mermin-GHZ Game}
\def\authorrunning{Ivan Fial\'ik}
\maketitle

\begin{abstract}
Communication complexity is an area of classical computer science which studies how much communication is necessary to solve various distributed computational problems. Quantum information processing can be used to reduce the amount of communication required to carry out some distributed problems. We speak of pseudo-telepathy when it is able to completely eliminate the need for communication. Since it is generally very hard to perfectly implement a quantum winning strategy for a pseudo-telepathy game, quantum players are almost certain to make errors even though they use a winning strategy. After introducing a model for pseudo-telepathy games, we investigate the impact of erroneously performed unitary transformations on the quantum winning strategy for the Mermin-GHZ game. The question of how strong the unitary noise can be so that quantum players would still be better than classical ones is also dealt with.
\end{abstract}
\section{Introduction}
Since quantum entanglement provides us with strong non-local correlations that cannot be achieved in the classical world, one can ask whether it can be used even to solve some distributed problems without any form of direct communication between the parties. Of course, we are interested only in such problems for which this does not hold in the classical world. On one hand, the answer is negative if we consider the standard communication complexity model \cite{scqrtdc} in which the parties compute a value of some function on their inputs and the whole result of the computation must become known to at least one party. Otherwise, faster-than-light communication would be possible which would contradict the Relativity Theory. On the other hand, if each party has its own input, computes its own output and we are interested only in non-local correlations between the inputs and the outputs, then the answer is positive. Such problems are often described using a terminology of the game theory and they are usually called pseudo-telepathy games.

In order to be able to describe what a pseudo-telepathy game \cite{qp} is, we explain at first what we mean by the term two party game. A \emph{two party game} $G$ is a sextuple $$G = (X, Y, A, B, P, W)$$ where $X, Y$ are \emph{input sets}, $A, B$ are \emph{output sets}, $P$ is a subset of $X \times Y$ known as a \emph{promise} and $W \subseteq X \times Y \times A \times B$ is a relation among the input sets and the output sets which is called a \emph{winning condition}. Before the game begins, the parties, usually called Alice and Bob, are allowed to discuss strategy and exchange any amount of classical information, including values of random variables. They may also share an unlimited amount of quantum entanglement. Afterwards, Alice and Bob are separated from each other and they are not allowed to communicate any more till the end of the game. In one \emph{round of the game}, Alice is given an input $x \in X$ and she is required to produce an output $a \in A$. Similarly, Bob is given an input $y \in Y$ and he is required to produce an output $b \in B$. The pairs $(x, y)$ and $(a, b)$ are called a \emph{question} and an \emph{answer}, respectively. Alice and Bob \emph{win the round} if either $(x, y) \notin P$ or $(x, y, a, b) \in W$. Alice and Bob \emph{win the game} if they have won all the rounds of it. A \emph{strategy} of Alice and Bob is \emph{winning} if it always allows them to win. 

We say that a two-party game is \emph{pseudo-telepathic} if there is no classical winning strategy, but there is a winning strategy, provided Alice and Bob share quantum entanglement. The origin of this term can be explained in the following way. Suppose that scientists who know nothing about quantum computing witness Alice and Bob playing some pseudo-telepathy game. More precisely, suppose that the players are very far from each other, are given their inputs at the same time and have to produce their outputs in time shorter than time required by light to travel between them. If Alice and Bob answer correctly in a sufficiently long sequence of rounds, the scientists will conclude that Alice and Bob can communicate somehow. One of possible explanations will be that the players are endowed with telepathic powers. A survey of pseudo-telepathy games can be found in \cite{qp}. The definition of these games can be easily generalized to more than two players.

A classical strategy $s$ for a pseudo-telepathy game $G$ is \emph{deterministic} if there are functions $s_{A} : X \to A$ and $s_{B}: Y \to B$ such that for each question $(x, y) \in X \times Y$, the only possible answer of Alice and Bob is the pair $(s_{A}(x), s_{B}(y))$. The \emph{success of a deterministic strategy} is defined as the proportion of questions from the promise $P$ for which it produces a correct answer. Clearly, this number can by interpreted as the probability that the strategy succeeds on a given question which is chosen uniformly and randomly. We denote with $\omega_{d}(G)$ the maximal success of a deterministic strategy for the game $G$: $$\omega_{d}(G) = \max_{s} \frac{\{(x, y) \in P \ | \ (x, y, s_{A}(x), s_{B}(y)) \in W\}}{|P|}.$$

Alice and Bob can also use a randomized strategy for $G$. Any randomized strategy can be seen as a probability distribution over a finite set of deterministic strategies. Therefore, if questions are chosen uniformly and randomly, the probability of winning a round of the game $G$ using a randomized strategy cannot be greater than $\omega_{d}(G)$ \cite{qp}.

We will consider in this paper quantum strategies of the following simple form. Alice and Bob share some entangled state $|\varphi\rangle$ of a quantum system consisting of two quantum bits. After they have been given their inputs $x \in X$ and $y \in Y$, the players apply on their parts of $|\varphi\rangle$ unitary transformations $U_{x}$ and $U_{y}$, respectively. These transformations can always be decomposed into rotations around one of the axes in the Bloch sphere and CNOT transformations \cite{qcqi}. Then the players perform measurements $M_{x}$ and $M_{y}$ on their parts of $|\varphi\rangle$ which give them their outputs $a \in A$ and $b \in B$, respectively. It was shown in \cite{mesdrfpt} that no strategy of this type for two-party games can be winning because the dimension of $|\varphi\rangle$ is too small. However, an analogue of this result does not hold for games for more players. This can be easily shown by examining properties of the quantum winning strategy for the Mermin-GHZ game.

This paper investigates the impact of unitary noise on the success of the quantum winning strategy for the Mermin-GHZ game. A general definition of this problem is given in the next section. The Mermin-GHZ game and the quantum winning strategy for it are described in Section 3. The remaining sections are devoted to the study of the influence of erroneously performed unitary transformations on the success of this strategy.
\section{Pseudo-Telepathy in the Presence of Unitary Noise}
An experimental implementation of a quantum winning strategy is generally very hard to be perfect. The players may perform imperfectly the unitary transformations required by the winning strategy or they may not be able to keep the required entangled quantum state. Moreover, their measurement devices may fail to give a correct outcome or may fail to give an outcome at all. This paper is focused on the imperfections of the first type and deals with the question how they affect the quantum winning strategy for the Mermin-GHZ game. It continues in the research initiated in \cite{nmghzg} where the impact of the imperfections of the second type has been analyzed.

Remind that the unitary transformations $U_{x}$ and $U_{y}$ can be implemented using only rotations around one of the axes in the Bloch sphere and CNOT transformations. The players are supposed not to be able to perform the required unitary transformations exactly. More precisely, we will focus on what happens if one of them, say Alice, performs some of her rotations around the right axes, but with an incorrect rotation angles. Thus, her erroneous unitary transformation is specified by an $n$-tuple $$\alpha_{x}^{\epsilon} = (\alpha_{x_{1}}^{\epsilon}, \ldots, \alpha_{x_{n}}^{\epsilon})$$ of rotation angles such that $$\alpha_{x_{i}}^{\epsilon} = \alpha_{x_{i}} + \epsilon_{i}$$ where $i \in \{1, \ldots, n\}$ and $\alpha_{x} = (\alpha_{x_{1}}, \ldots, \alpha_{x_{n}})$ is an $n$-tuple of Alice's correct rotation angles. The $n$-tuple $\epsilon = (\epsilon_{1}, \ldots, \epsilon_{n})$ of rotation angles is called an error. In the following we will focus on two basic types of errors, systematic errors and random errors.

An error is said to be \emph{systematic} if it is constant for a given unitary transformation. Let $p_{\epsilon}^{A}(x, y)$ ($p_{\epsilon}^{B}(x, y)$) be a probability of the event that Alice and Bob obtain after performing the quantum winning strategy for $G$ with systematic error $\epsilon$, which affects Alice's (Bob's) unitary transformation, a state corresponding to a correct answer to $(x, y)$.

An error is said to be random with bound $\delta = (\delta_{1}, \ldots, \delta_{n})$ if for each $i \in \{1, \ldots, n\}$ the error $\epsilon_{i}$ is chosen uniformly and randomly from the interval $[-\delta_{i}, \delta_{i}]$. Let $p_{\delta}^{A}(x, y)$ ($p_{\delta}^{B}(x, y)$) be a probability of the event that Alice and Bob obtain after performing the quantum winning strategy for $G$ with random error with bound $\delta$, which affects Alice's (Bob's) unitary transformation, a state corresponding to a correct answer to $(x, y)$.
\section{Mermin-GHZ Game}
Alice, Bob and Charles have each one bit as an input with the promise that the parity of the input bits is 0 \cite{btwi}. We denote the input bits $x_{1}$, $x_{2}$ and $x_{3}$. The task for each player is to produce one bit so that the parity of the output bits is equal to the disjunction of the input bits. Thus, if $a_{1}$, $a_{2}$ and $a_{3}$ are the outputs, then the equation $a_{1} \oplus a_{2} \oplus a_{3} = x_{1} \lor x_{2} \lor x_{3}$ must hold.
 
In a quantum winning strategy for the Mermin-GHZ game Alice, Bob and Charles share the entangled state $|\varphi\rangle = \frac{1}{\sqrt{2}}(|000\rangle + |111\rangle)$ \footnote{A measurement of one of the qubits in the computational basis yields with probability $\frac{1}{2}$ the result 0 and with probability $\frac{1}{2}$ the result 1. The measurement causes the remaining two qubits to collapse into the state which has been measured. These peculiar correlations, which are a characteristic property of entangled quantum states, appear regardless of the distance between the individual qubits and have no counterpart in the classical world.}. After the players have received their inputs $x_{1}, x_{2}, x_{3}$, respectively, each of them does the following:
\begin{enumerate}
   \item Applies to his or her register the unitary transformation $U$ which is defined as $$U(|0\rangle) = |0\rangle,$$$$U(|1\rangle) = e^{\frac{\pi \imath x_{i}}{2}}|1\rangle.$$
   \item Applies to his or her register the Hadamard transformation $H$.
   \item Performs a measurement in the computational basis on his or her register. Outputs the bit $a_{i}$ which he or she has measured.
\end{enumerate}

Observe that if $x_{i} = 1$, the unitary transformation $U$ can be seen as the rotation of $\frac{\pi}{2}$ around the $z$ axis in the Bloch sphere, otherwise it is the identity transformation. Similarly, the Hadamard transformation can be seen as the rotation of $\frac{\pi}{2}$ around the $y$ axis. Observe further that the Mermin-GHZ game and also the quantum winning strategy are completely symmetric. Therefore, it is sufficient to compute success probabilities for errors made by one fixed player, say Alice. Success probabilities for errors made by the other players have to be the same.

It is not very hard to see that the best possible classical strategy for the Mermin-GHZ game enables the players to win with probability $\frac{3}{4}$.
\section{Errors in the First Transformation}
\subsection{Systematic errors}
\begin{theorem}
If Alice performs the unitary transformation $U$ imperfectly with systematic error $\epsilon$, then the quantum winning strategy for the Mermin-GHZ game gives a correct answer with probability
\begin{eqnarray*}
               P_{\epsilon}(x_{1}, x_{2}, x_{3}) = 
                  \left\{ \begin{array}{ll}
                      1 \ & \ \text{if $x_{1} = 0$}\\
                      \frac{1 + \cos\epsilon}{2} \ & \text{otherwise}\\
                  \end{array} \right.
\end{eqnarray*}
for any inputs $x_{1}$, $x_{2}$ and $x_{3}$ satisfying the promise. It gives a correct answer with probability $$P_{\epsilon} = \frac{1}{2} + \frac{1 + \cos\epsilon}{4}$$ for a question chosen uniformly and randomly from P.
\end{theorem}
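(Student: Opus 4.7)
The plan is to carry the three-qubit state through each step of the (possibly imperfect) strategy, tracking only the single relative phase on $|111\rangle$, and then read off the winning probability from the Hadamard-expanded amplitudes.

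First I would separate on the value of $x_1$. If $x_1 = 0$, Alice's ideal $U$ is the identity and she performs no rotation at all; since a systematic error by definition perturbs only a rotation Alice actually executes, her step is carried out exactly. Bob and Charles are error-free as well, so the strategy reduces to the perfect Mermin-GHZ protocol and yields a correct answer with probability $1$. This is the first branch of $P_\epsilon$.

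If $x_1 = 1$, the promise forces exactly one of $x_2, x_3$ to be $1$, hence $x_1 \lor x_2 \lor x_3 = 1$ and the winning condition demands odd output parity. After the three $U$'s are applied, the shared state is
\[ \tfrac{1}{\sqrt{2}}\bigl(|000\rangle + e^{i(\pi + \epsilon)}|111\rangle\bigr), \]
since the three phase factors on $|111\rangle$ multiply to $e^{i\pi(x_1+x_2+x_3)/2}=e^{i\pi}$ and Alice contributes an extra $e^{i\epsilon}$. Applying $H^{\otimes 3}$ gives a state in which the amplitude of $|a\rangle$ is proportional to $1 + (-1)^{|a|}e^{i(\pi+\epsilon)}$; summing squared magnitudes over odd-parity $a$ produces the standard factor $\tfrac{1-\cos(\pi+\epsilon)}{2} = \tfrac{1+\cos\epsilon}{2}$, which is the second branch of the claimed formula.

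The averaged statement follows immediately: the promise $P$ contains exactly four triples, two with $x_1 = 0$ and two with $x_1 = 1$, so
\[ P_\epsilon \;=\; \tfrac{1}{4}\Bigl(1 + 1 + 2\cdot\tfrac{1+\cos\epsilon}{2}\Bigr) \;=\; \tfrac{1}{2} + \tfrac{1+\cos\epsilon}{4}. \]
The computation is essentially mechanical, so no real obstacle arises; the only delicate point is conceptual, namely interpreting a \emph{systematic error} as a perturbation of rotations that are actually performed, which is exactly what makes the $x_1 = 0$ branch collapse to exact correctness instead of inheriting an $\epsilon$-dependent factor.
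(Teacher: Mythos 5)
Your proof is correct and follows essentially the same route as the paper's: a case split on $x_1$ (noting that the error is harmless when $x_1=0$ because the exponent $(\frac{\pi}{2}+\epsilon)\imath x_1$ vanishes), followed by computing the measurement probabilities of the four odd-parity outcomes, each equal to $\frac{1}{8}(1+\cos\epsilon)$, and averaging over the four questions in $P$. You merely carry out explicitly, via state-vector amplitudes, the ``straightforward but slightly tedious computation'' that the paper leaves implicit in the diagonal entries of $\rho_E$.
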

\begin{proof}
Let Alice, Bob and Charles be given bits $x_{1}, x_{2}, x_{3}$, respectively, such that $\sum_{i = 1}^{3}x_{i}$ is divisible by $2$. Suppose that while Bob and Charles perform all the steps required by the
quantum winning strategy for the Mermin-GHZ game correctly, Alice deviates in the first step from the winning strategy by performing the unitary transformation
$$U_{\epsilon}(|0\rangle) = |0\rangle,$$$$U_{\epsilon}(|1\rangle) = e^{(\frac{\pi}{2} + \epsilon) \imath x_{i}}|1\rangle$$ where $\epsilon$ is a constant. Let $t_{0}, \ldots, t_{7}$ be the diagonal elements of the resulting density matrix $\rho_{E}$. Since the players perform their measurements in the computational basis, the probability $P_{\epsilon}(x_{1}, x_{2}, x_{3})$ of the event that they obtain outcomes corresponding to a correct answer to the question $(x_{1}, x_{2}, x_{3})$ can be simply computed as a sum of some of the numbers $t_{0}, \ldots, t_{7}$.

If $x_{1} = 0$, the players actually perform the quantum winning strategy for the Mermin-GHZ game. Therefore, they win with probability 1.

Otherwise, the players win if and only if they measure one of the states $$|001\rangle, |010\rangle, |100\rangle, |111\rangle.$$ A straightforward but slightly tedious computation yields $$t_{1} = t_{2} = t_{4} = t_{7} = \frac{1}{8}(1 + \cos\epsilon).$$ 

It follows that
\begin{eqnarray*}
               P_{\epsilon}(x_{1}, x_{2}, x_{3}) = 
                  \left\{ \begin{array}{ll}
                      1 \ & \ \text{if $x_{1} = 0$}\\
                      \frac{1 + \cos\epsilon}{2} \ & \text{otherwise}\\
                  \end{array} \right.
\end{eqnarray*}
and $$P_{\epsilon} = \frac{1}{2} + \frac{1 + \cos\epsilon}{4}.$$
\end{proof}
Quantum players are better than classical ones in the presence of a systematic error in the first unitary transformation if $\epsilon \in (-\frac{\pi}{2}, \frac{\pi}{2})$.
\begin{figure}[h]
    \centering
    \includegraphics[width = 7cm]{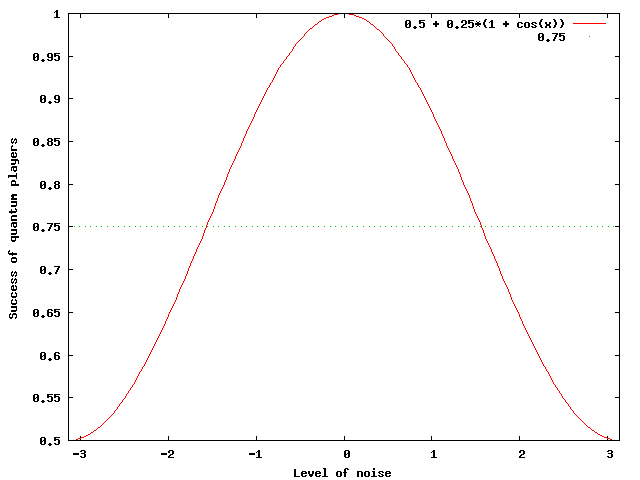}
    \includegraphics[width = 7cm]{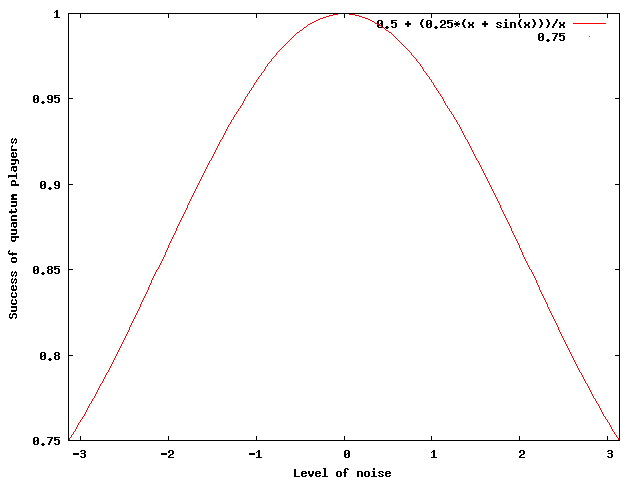}
\caption{Dependence of the success of quantum players on the level of unitary noise in the presence of a systematic (the graph on the left) and a random error made by one player in the first unitary transformation.}   
\end{figure}
\subsection{Random errors}
\begin{theorem}
If Alice performs the unitary transformation $U$ imperfectly with random error with bound $\delta$, then the quantum winning strategy for the Mermin-GHZ game gives a correct answer with probability
\begin{eqnarray*}
               P_{\delta}(x_{1}, x_{2}, x_{3}) = 
                  \left\{ \begin{array}{ll}
                      1 \ & \ \text{if $x_{1} = 0$}\\
                      \frac{\delta + \sin\delta}{2\delta} \ & \text{otherwise}\\
                  \end{array} \right.
\end{eqnarray*}
for any inputs $x_{1}$, $x_{2}$ and $x_{3}$ satisfying the promise. It gives a correct answer with probability $$P_{\delta} = \frac{1}{2} + \frac{\delta + \sin\delta}{4\delta}$$ for a question chosen uniformly and randomly from P.
\end{theorem}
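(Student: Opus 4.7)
The plan is to reduce this statement to Theorem 1. By definition, a random error $\epsilon$ with bound $\delta$ is drawn uniformly from $[-\delta, \delta]$; conditional on the realized value of $\epsilon$, the physical situation is exactly the systematic-error setting already handled. So I expect $P_{\delta}(x_{1}, x_{2}, x_{3})$ to be the average of $P_{\epsilon}(x_{1}, x_{2}, x_{3})$ from Theorem 1, taken uniformly over $\epsilon \in [-\delta, \delta]$.

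I would split on $x_{1}$. The case $x_{1} = 0$ is immediate: Theorem 1 gives probability $1$ regardless of $\epsilon$, so averaging preserves this. For $x_{1} = 1$, I would evaluate
\[
P_{\delta}(x_{1}, x_{2}, x_{3}) = \frac{1}{2\delta}\int_{-\delta}^{\delta} \frac{1 + \cos\epsilon}{2}\, d\epsilon,
\]
using the fact that the integrand is even to reduce to an integral over $[0, \delta]$; the antiderivative involves only $\epsilon$ and $\sin\epsilon$ and yields $\frac{\delta + \sin\delta}{2\delta}$ after cancellation.

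The overall probability $P_{\delta}$ would then be obtained by averaging $P_{\delta}(x_{1}, x_{2}, x_{3})$ uniformly over the four questions in the promise, namely $(0,0,0)$, $(0,1,1)$, $(1,0,1)$, $(1,1,0)$. Two of them have $x_{1} = 0$ (contributing $1$ each) and two have $x_{1} = 1$ (contributing $\frac{\delta + \sin\delta}{2\delta}$ each), which gives $\frac{1}{2} + \frac{\delta + \sin\delta}{4\delta}$ as claimed.

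I do not anticipate a genuine obstacle. The only subtle point is the implicit interchange of the classical expectation over $\epsilon$ with the quantum state preparation and measurement. This is justified because, in the proof of Theorem 1, the diagonal entries $t_{0}, \ldots, t_{7}$ of the density matrix $\rho_{E}$ are continuous (in fact trigonometric) functions of $\epsilon$, and the success probability is simply a fixed sum of four of them; by Fubini's theorem, averaging the classical parameter $\epsilon$ commutes with taking this sum, so the integral of the systematic-error expression yields the correct random-error probability.
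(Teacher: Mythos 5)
Your proposal is correct and follows essentially the same route as the paper: the paper likewise computes $P_{\delta}(x_{1},x_{2},x_{3}) = \frac{1}{2\delta}\int_{-\delta}^{\delta}P_{\epsilon}(x_{1},x_{2},x_{3})\,\mathrm{d}\epsilon$ using the systematic-error result of Theorem 1 and then averages over the promise. Your added remark justifying the interchange of the expectation over $\epsilon$ with the measurement statistics is a detail the paper leaves implicit, but the argument is the same.
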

\begin{proof}
Let Alice, Bob and Charles be given bits $x_{1}, x_{2}, x_{3}$, respectively, such that $\sum_{i = 1}^{3}x_{i}$ is divisible by $2$. Suppose that while Bob and Charles perform all the steps required by the quantum winning strategy for the Mermin-GHZ game correctly, Alice deviates in the first step from the winning strategy by performing the unitary transformation $$U_{\epsilon}(|0\rangle) = |0\rangle,$$$$U_{\epsilon}(|1\rangle) = e^{(\frac{\pi}{2} + \epsilon) \imath x_{i}}|1\rangle$$ where $\epsilon$ is chosen uniformly and randomly from the interval $[-\delta, \delta]$. The probability $P_{\delta}(x_{1}, x_{2}, x_{3})$ of the event that they obtain outcomes corresponding to a correct answer to the question $(x_{1}, x_{2}, x_{3})$ can be computed as $$P_{\delta}(x_{1}, x_{2}, x_{3}) = \frac{1}{2\delta}\int\limits_{-\delta}^{\delta}P_{\epsilon}(x_{1}, x_{2}, x_{3})\, \mathrm{d} \epsilon$$ where $P_{\epsilon}(x_{1}, x_{2}, x_{3})$ is the probability of a correct answer in the presence of a systematic error in the first transformation. 

It follows that
\begin{eqnarray*}
               P_{\delta}(x_{1}, x_{2}, x_{3}) = 
                  \left\{ \begin{array}{ll}
                      1 \ & \ \text{if $x_{1} = 0$}\\
                      \frac{\delta + \sin\delta}{2\delta} \ & \text{otherwise}\\
                  \end{array} \right.
         \end{eqnarray*}
and $$P_{\alpha} = \frac{1}{2} + \frac{\delta + \sin\delta}{4\delta}.$$
\end{proof}

Quantum players are better than classical ones in the presence of a random error in the first unitary transformation if $\delta \in (-\pi, \pi)$.
\section{Errors in the Second Transformation}
\subsection{Systematic errors}
\begin{theorem}
If Alice performs the Hadamard transformation $H$ imperfectly with systematic error $\epsilon$, then the quantum winning strategy for the Mermin-GHZ game gives a correct answer with probability $$P_{\epsilon}(x_{1}, x_{2}, x_{3}) = \frac{1}{2} + \sin(\frac{\pi}{4} + \frac{\epsilon}{2})\cos(\frac{\pi}{4} + \frac{\epsilon}{2})$$
for any inputs $x_{1}$, $x_{2}$ and $x_{3}$ satisfying the promise. It gives a correct answer with probability $$P_{\epsilon} = \frac{1}{2} + \sin(\frac{\pi}{4} + \frac{\epsilon}{2})\cos(\frac{\pi}{4} + \frac{\epsilon}{2})$$ for a question chosen uniformly and randomly from P.
\end{theorem}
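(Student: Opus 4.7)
My plan is to mirror the proof of Theorem~1 in the state-vector picture, since for fixed $\epsilon$ Alice's faulty operation is a pure unitary. Writing $c=\cos(\pi/4+\epsilon/2)$ and $s=\sin(\pi/4+\epsilon/2)$, I would take Alice's rotation of $\pi/2+\epsilon$ around the $y$-axis to act by $H_{\epsilon}|0\rangle=c|0\rangle+s|1\rangle$ and $H_{\epsilon}|1\rangle=s|0\rangle-c|1\rangle$, which at $\epsilon=0$ reduces to the ordinary $H$ and gives the sanity check $P_0=1$.

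First, I let all three players apply the correctly executed phase $U$ to $|\varphi\rangle=\tfrac{1}{\sqrt{2}}(|000\rangle+|111\rangle)$. Because the promise forces $x_1+x_2+x_3\in\{0,2\}$, the state after this step is either $\tfrac{1}{\sqrt{2}}(|000\rangle+|111\rangle)$, when $(x_1,x_2,x_3)=(0,0,0)$, or $\tfrac{1}{\sqrt{2}}(|000\rangle-|111\rangle)$, for each of the three other promise questions. This global $\pm$ sign is the only way the inputs enter the remainder of the calculation.

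Next, I apply $H_{\epsilon}$ on Alice's register and $H$ on Bob's and Charles's, using $H|0\rangle=\tfrac{1}{\sqrt{2}}(|0\rangle+|1\rangle)$ and $H|1\rangle=\tfrac{1}{\sqrt{2}}(|0\rangle-|1\rangle)$, and read off the eight amplitudes. A short bookkeeping shows that, up to the common factor $1/(2\sqrt{2})$, every even-parity basis vector carries amplitude $\pm(c+s)$ and every odd-parity one amplitude $\pm(c-s)$ in the first case, while the roles are swapped in the second case. The Mermin--GHZ winning condition $a_1\oplus a_2\oplus a_3=x_1\lor x_2\lor x_3$ selects the even-parity strings for input $(0,0,0)$ and the odd-parity strings for each of the other three promise questions, so in every case the winning set is exactly the one whose amplitudes are $\pm(c+s)/(2\sqrt{2})$, and summing four such squared amplitudes yields $(c+s)^2/2$.

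Finally, the identity $(c+s)^2=1+2cs$ gives $P_{\epsilon}(x_1,x_2,x_3)=\tfrac{1}{2}+cs=\tfrac{1}{2}+\sin(\pi/4+\epsilon/2)\cos(\pi/4+\epsilon/2)$, independent of the particular promise question, so the uniformly averaged $P_{\epsilon}$ has the same form. The only real obstacle is the sign bookkeeping: the minus sign in $H_{\epsilon}|1\rangle$ is essential, and the coincidence that the parity of the winning set flips in exact step with the $\pm$ produced after the first step is what makes the success probability input-independent, in contrast with Theorem~1.
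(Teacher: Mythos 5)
Your proposal is correct and follows essentially the same route as the paper: fix a promise question, evolve the GHZ state through the (partly faulty) unitaries, and sum the squared amplitudes of the basis states satisfying the winning condition; since the error is a fixed unitary, your state-vector amplitudes are exactly the diagonal entries $t_{0},\ldots,t_{7}$ of the paper's density matrix, and your value $(c+s)^{2}/8=\tfrac{1}{4}\bigl(\tfrac{1}{2}+\sin(\tfrac{\pi}{4}+\tfrac{\epsilon}{2})\cos(\tfrac{\pi}{4}+\tfrac{\epsilon}{2})\bigr)$ matches the paper's stated $t_{i}$. The only difference is that you carry out explicitly the ``straightforward but slightly tedious computation'' the paper omits, including the sign bookkeeping that explains why the answer is input-independent.
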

\begin{proof}
Let Alice, Bob and Charles be given bits $x_{1}, x_{2}, x_{3}$, respectively, such that $\sum_{i = 1}^{3}x_{i}$ is divisible by $2$. Suppose that while Bob and Charles perform all the steps required by the quantum winning strategy for the Mermin-GHZ game correctly, Alice deviates in the second step from the winning strategy by performing the unitary transformation 
\begin{displaymath}
H_{\epsilon} = \left( \begin{array}{cc}
    \cos(\frac{\pi}{4} + \frac{\epsilon}{2}) & \sin(\frac{\pi}{4} + \frac{\epsilon}{2}) \\
    \sin(\frac{\pi}{4} + \frac{\epsilon}{2}) & -\cos(\frac{\pi}{4} + \frac{\epsilon}{2}) \\
  \end{array} \right) 
\end{displaymath}  
where $\epsilon$ is a constant. Let $t_{0}, \ldots, t_{7}$ be the diagonal elements of the resulting density matrix $\rho_{E}$. Since the players perform their measurements in the computational basis, the probability $P_{\delta}(x_{1}, x_{2}, x_{3})$ of the event that they obtain outcomes corresponding to a correct answer to the question $(x_{1}, x_{2}, x_{3})$ can be simply computed as a sum of some of the numbers $t_{0}, \ldots, t_{7}$.

If $x_{1} = x_{2} = x_{3} = 0$, the players win if and only if they measure one of the states $$|000\rangle, |011\rangle, |101\rangle, |110\rangle.$$ A straightforward but slightly tedious computation yields $$t_{0} = t_{3} = t_{5} = t_{6} = \frac{1}{4}(\frac{1}{2} + \sin(\frac{\pi}{4} + \frac{\epsilon}{2})\cos(\frac{\pi}{4} + \frac{\epsilon}{2})).$$

If some $x_{i} = 1$, where $i \in \{1, 2, 3\}$, the players win if and only if they measure one of the states $$|001\rangle, |010\rangle, |100\rangle, |111\rangle.$$ A straightforward but slightly tedious computation yields $$t_{1} = t_{2} = t_{4} = t_{7} = \frac{1}{4}(\frac{1}{2} + \sin(\frac{\pi}{4} + \frac{\epsilon}{2})\cos(\frac{\pi}{4} + \frac{\epsilon}{2})).$$ 

It follows that $$P_{\epsilon}(x_{1}, x_{2}, x_{3}) = P_{\epsilon} = \frac{1}{2} + \sin(\frac{\pi}{4} + \frac{\epsilon}{2})\cos(\frac{\pi}{4} + \frac{\epsilon}{2}).$$
\end{proof}

Quantum players are better than classical ones in the presence of a systematic error in the Hadamard transformation if $\epsilon \in (-\frac{\pi}{3}, \frac{\pi}{3})$.
\begin{figure}[h]
    \centering
    \includegraphics[width = 7cm]{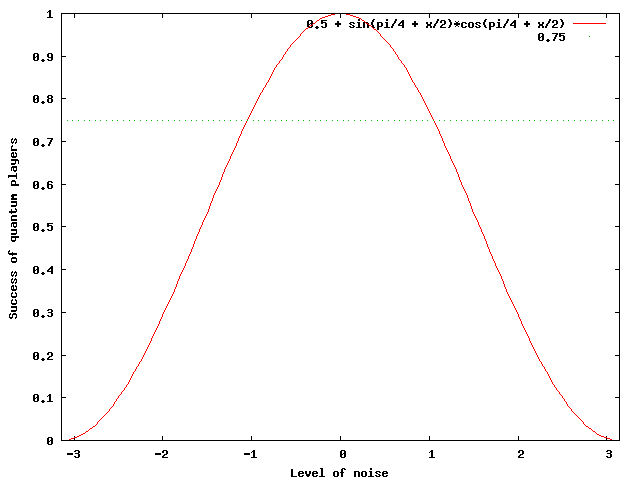}
    \includegraphics[width = 7cm]{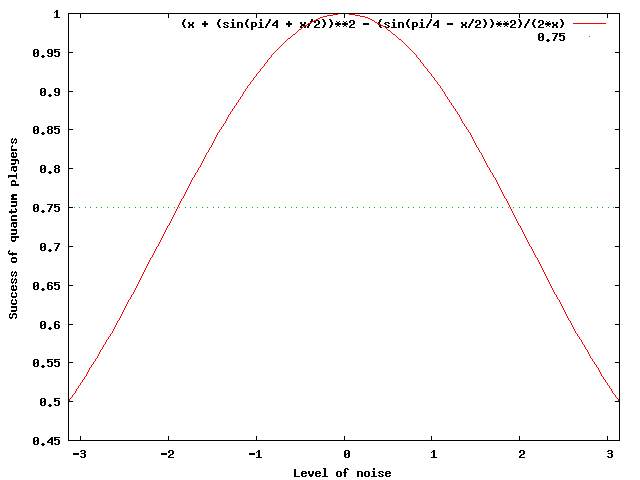}
\caption{Dependence of the success of quantum players on the level of unitary noise in the presence of a systematic (the graph on the left) and a random error made by one player in the second unitary transformation.}     
\end{figure}
\subsection{Random errors}
\begin{theorem}
If Alice performs the Hadamard transformation $H$ imperfectly with random error with bound $\delta$, then the quantum winning strategy for the Mermin-GHZ game gives a correct answer with probability $$P_{\delta}(x_{1}, x_{2}, x_{3}) = \frac{1}{2\delta}(\delta + \sin^{2}(\frac{\pi}{4} + \frac{\delta}{2}) - \sin^{2}(\frac{\pi}{4} - \frac{\delta}{2}))$$
for any inputs $x_{1}$, $x_{2}$ and $x_{3}$ satisfying the promise. It gives a correct answer with probability $$P_{\delta} = \frac{1}{2\delta}(\delta + \sin^{2}(\frac{\pi}{4} + \frac{\delta}{2}) - \sin^{2}(\frac{\pi}{4} - \frac{\delta}{2}))$$ for a question chosen uniformly and randomly from P.
\end{theorem}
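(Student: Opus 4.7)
The plan is to mirror exactly the argument used in Theorem 2, using the systematic-error probability computed in Theorem 3 as a building block. Since the random error $\epsilon$ is drawn uniformly from $[-\delta,\delta]$ and the conditional probability of winning given $\epsilon$ is precisely $P_{\epsilon}(x_1,x_2,x_3)$ from Theorem 3, the law of total probability gives
\[
P_{\delta}(x_1,x_2,x_3) \;=\; \frac{1}{2\delta}\int_{-\delta}^{\delta} P_{\epsilon}(x_1,x_2,x_3)\,\mathrm{d}\epsilon.
\]
A crucial observation, inherited directly from Theorem 3, is that $P_{\epsilon}(x_1,x_2,x_3)$ does not depend on the particular question $(x_1,x_2,x_3)\in P$. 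Hence the per-question probability and the average over $P$ will coincide, exactly as claimed.

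Next I would carry out the integration. Substituting the formula from Theorem 3, the integrand splits into a constant piece $\tfrac{1}{2}$ and the term $\sin(\tfrac{\pi}{4}+\tfrac{\epsilon}{2})\cos(\tfrac{\pi}{4}+\tfrac{\epsilon}{2})$. The constant piece contributes $\tfrac{1}{2}$ to the average. For the trigonometric piece I would use the substitution $u=\tfrac{\pi}{4}+\tfrac{\epsilon}{2}$, so that $\mathrm{d}\epsilon = 2\,\mathrm{d}u$ and
\[
\int \sin\!\Bigl(\tfrac{\pi}{4}+\tfrac{\epsilon}{2}\Bigr)\cos\!\Bigl(\tfrac{\pi}{4}+\tfrac{\epsilon}{2}\Bigr)\mathrm{d}\epsilon \;=\; \sin^{2}\!\Bigl(\tfrac{\pi}{4}+\tfrac{\epsilon}{2}\Bigr)+C,
\]
so evaluating at $\pm\delta$ and dividing by $2\delta$ yields the difference $\tfrac{1}{2\delta}\bigl(\sin^{2}(\tfrac{\pi}{4}+\tfrac{\delta}{2})-\sin^{2}(\tfrac{\pi}{4}-\tfrac{\delta}{2})\bigr)$. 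Adding the $\tfrac{1}{2}$ from the constant piece and recombining over a common denominator reproduces the stated expression.

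Since $P_{\epsilon}$ is input-independent, the average over $P$ follows at once, yielding the second formula identical to the first. There is essentially no obstacle: the only mild pitfall is keeping the factor of $2$ from the substitution straight, which cancels against the $\tfrac{1}{2\delta}$ normalization in just the right way to produce the clean form $\tfrac{1}{2\delta}(\delta + \sin^{2}(\tfrac{\pi}{4}+\tfrac{\delta}{2}) - \sin^{2}(\tfrac{\pi}{4}-\tfrac{\delta}{2}))$. I would note in passing (though it is not needed for the proof) that via the identity $\sin^{2}A-\sin^{2}B = \sin(A+B)\sin(A-B)$ this simplifies to $\tfrac{1}{2}+\tfrac{\sin\delta}{2\delta}$, which provides a convenient sanity check and an easy route to determining the range of $\delta$ for which quantum players beat the classical threshold $\tfrac{3}{4}$.
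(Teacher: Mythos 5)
Your proposal is correct and follows exactly the same route as the paper: both express $P_{\delta}(x_{1},x_{2},x_{3})$ as the uniform average $\frac{1}{2\delta}\int_{-\delta}^{\delta}P_{\epsilon}(x_{1},x_{2},x_{3})\,\mathrm{d}\epsilon$ of the systematic-error probability from the preceding theorem and then integrate; you merely spell out the substitution step that the paper leaves implicit, and your check that the result equals $\frac{1}{2}+\frac{\sin\delta}{2\delta}$ is accurate.
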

\begin{proof}
Let Alice, Bob and Charles be given bits $x_{1}, x_{2}, x_{3}$, respectively, such that $\sum_{i = 1}^{3}x_{i}$ is divisible by $2$. Suppose that while Bob and Charles perform all the steps required by the quantum winning strategy for the Mermin-GHZ game correctly, Alice deviates in the second step from the winning strategy by performing the unitary transformation 
\begin{displaymath}
H_{\epsilon} = \left( \begin{array}{cc}
    \cos(\frac{\pi}{4} + \frac{\epsilon}{2}) & \sin(\frac{\pi}{4} + \frac{\epsilon}{2}) \\
    \sin(\frac{\pi}{4} + \frac{\epsilon}{2}) & -\cos(\frac{\pi}{4} + \frac{\epsilon}{2}) \\
  \end{array} \right) 
\end{displaymath}  
where $\epsilon$ is chosen uniformly and randomly from the interval $[-\delta, \delta]$. The probability $P_{\delta}(x_{1}, x_{2}, x_{3})$ of the event that they obtain outcomes corresponding to a correct answer to the question $(x_{1}, x_{2}, x_{3})$ can be computed as $$P_{\delta}(x_{1}, x_{2}, x_{3}) = \frac{1}{2\delta}\int\limits_{-\delta}^{\delta}P_{\epsilon}(x_{1}, x_{2}, x_{3})\, \mathrm{d} \epsilon$$ where $P_{\epsilon}(x_{1}, x_{2}, x_{3})$ is the probability of a correct answer in the presence of a systematic error in the Hadamard transformation. 

It follows that $$P_{\epsilon}(x_{1}, x_{2}, x_{3}) = P_{\epsilon} = \frac{1}{2\delta}(\delta + \sin^{2}(\frac{\pi}{4} + \frac{\delta}{2}) - \sin^{2}(\frac{\pi}{4} - \frac{\delta}{2})).$$
\end{proof}

Quantum players are better than classical ones in the presence of a random error in the Hadamard transformation if $\delta \in (-1.896, 1.896)$. This is only an approximate result since it has been computed using numerical methods.
\section{Errors in Both Transformations}
\subsection{Systematic errors}
\begin{theorem}
If Alice performs both transformations $U$, $H$ imperfectly with systematic errors $\epsilon_{1}$, $\epsilon_{2}$, respectively, then the quantum winning strategy for the Mermin-GHZ game gives a correct answer with probability \begin{eqnarray*}
               P_{\epsilon}(x_{1}, x_{2}, x_{3}) = 
                  \left\{ \begin{array}{ll}
                      \frac{1}{2} + \sin(\frac{\pi}{4} + \frac{\epsilon_{2}}{2})\cos(\frac{\pi}{4} + \frac{\epsilon_{2}}{2}) \ & \ \text{if $x_{1} = 0$}\\
                      \frac{1}{2} + \sin(\frac{\pi}{4} + \frac{\epsilon_{2}}{2})\cos(\frac{\pi}{4} + \frac{\epsilon_{2}}{2})\cos\epsilon_{1} \ & \text{otherwise.}\\
                  \end{array} \right.
\end{eqnarray*}
for any inputs $x_{1}$, $x_{2}$ and $x_{3}$ satisfying the promise. It gives a correct answer with probability $$P_{\epsilon} = \frac{1}{2}(1 + \sin(\frac{\pi}{4} + \frac{\epsilon_{2}}{2})\cos(\frac{\pi}{4} + \frac{\epsilon_{2}}{2})(1 + \cos\epsilon_{1}))$$ for a question chosen uniformly and randomly from P.
\end{theorem}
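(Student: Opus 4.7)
The approach is to mirror the structure of the proofs of Theorems 1 and 5, combining them. First I would apply to $|\varphi\rangle = \tfrac{1}{\sqrt{2}}(|000\rangle + |111\rangle)$ the combined first-step transformation $U_{\epsilon_1}^A \otimes U^B \otimes U^C$ and then the combined second-step $H_{\epsilon_2}^A \otimes H^B \otimes H^C$, where $U_{\epsilon_1}$ and $H_{\epsilon_2}$ are defined exactly as in those earlier proofs. Let $\rho_E$ be the resulting density matrix and $t_0, \dots, t_7$ its diagonal entries; as in the previous theorems, each $P_{\epsilon}(x_1, x_2, x_3)$ is just the sum of the $t_i$ indexed by winning computational-basis strings.

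The natural case split is by the value of $x_1$, exploiting the promise $x_1 \oplus x_2 \oplus x_3 = 0$. If $x_1 = 0$, then the phase $e^{\imath(\pi/2+\epsilon_1)x_1}$ equals $1$ and $U_{\epsilon_1}$ acts as the identity on Alice's qubit regardless of $\epsilon_1$, so the calculation reduces verbatim to that of Theorem 5 and yields $\tfrac{1}{2} + \sin(\tfrac{\pi}{4}+\tfrac{\epsilon_2}{2})\cos(\tfrac{\pi}{4}+\tfrac{\epsilon_2}{2})$. If $x_1 = 1$, the promise forces $x_2+x_3 = 1$, and after the $U$-step the state becomes $\tfrac{1}{\sqrt{2}}(|000\rangle - e^{\imath\epsilon_1}|111\rangle)$, absorbing both Alice's extra phase $e^{\imath\epsilon_1}$ and the factor $\imath \cdot \imath = -1$ contributed by the two correct $U$ transformations on Bob and Charles. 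Applying $H_{\epsilon_2}^A \otimes H^B \otimes H^C$ and expanding $|\langle abc | \psi \rangle|^2$ for each of the four winning strings $|001\rangle, |010\rangle, |100\rangle, |111\rangle$ gives in every case the common value $\tfrac{1}{8}(1 + 2cs'\cos\epsilon_1)$, where $cs' = \sin(\tfrac{\pi}{4}+\tfrac{\epsilon_2}{2})\cos(\tfrac{\pi}{4}+\tfrac{\epsilon_2}{2})$.

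The step I expect to be the main technical obstacle is verifying, in the $x_1 = 1$ case, that the imaginary $\sin\epsilon_1$ contributions arising from the cross-term between the $|000\rangle$ and $|111\rangle$ branches cancel uniformly across all four winning diagonal entries, leaving only a clean $\cos\epsilon_1$ factor scaling the Theorem 5 contribution; the diagonal structure of $\rho_E$ and the sign pattern $(-1)^{b+c}$ coming from $H^B \otimes H^C$ on $|11\rangle$ are precisely what make this cancellation go through. Summing the four entries yields $\tfrac{1}{2} + cs'\cos\epsilon_1$, matching the announced case formula. The averaged formula then follows from the observation that the promise $P$ consists of exactly the four triples $(0,0,0), (0,1,1), (1,0,1), (1,1,0)$, two with $x_1 = 0$ and two with $x_1 = 1$, so $P_{\epsilon} = \tfrac{1}{2}(\tfrac{1}{2}+cs') + \tfrac{1}{2}(\tfrac{1}{2}+cs'\cos\epsilon_1) = \tfrac{1}{2}\bigl(1 + cs'(1+\cos\epsilon_1)\bigr)$, which is the claimed expression.
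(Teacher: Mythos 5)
Your proposal is correct and follows essentially the same route as the paper, whose proof merely asserts that a ``straightforward but slightly tedious computation'' yields $t_1=t_2=t_4=t_7=\frac{1}{4}(\frac{1}{2}+\sin(\frac{\pi}{4}+\frac{\epsilon_2}{2})\cos(\frac{\pi}{4}+\frac{\epsilon_2}{2})\cos\epsilon_1)$ and reduces the $x_1=0$ case to the systematic-Hadamard-error theorem; your derivation fills in that computation and arrives at the same values and the same averaging over the four promise inputs. One small bookkeeping slip: when $x_1=1$ the promise forces exactly one of $x_2,x_3$ to equal $1$, so the factor $-1$ in $\frac{1}{\sqrt{2}}(|000\rangle - e^{\imath\epsilon_1}|111\rangle)$ arises from Alice's correct phase $\imath$ multiplied by the single $\imath$ contributed by whichever of Bob and Charles holds input $1$, not from two factors of $\imath$ coming from Bob and Charles; the resulting state you carry forward is nevertheless the correct one.
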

\begin{proof}
Let Alice, Bob and Charles be given bits $x_{1}, x_{2}, x_{3}$, respectively, such that $\sum_{i = 1}^{3}x_{i}$ is divisible by $2$. Suppose that while Bob and Charles perform all the steps required by the quantum winning strategy for the Mermin-GHZ game correctly, Alice deviates in the first and the second step from the winning strategy by performing the unitary transformations 
\begin{eqnarray*}
\left. \begin{matrix}
U_{\epsilon_{1}}(|0\rangle) = |0\rangle, \\
U_{\epsilon_{1}}(|1\rangle) = e^{(\frac{\pi}{2} + \epsilon_{1}) \imath x_{i}}|1\rangle \\
\end{matrix} \right.,~~
H_{\epsilon_{2}} = \left( \begin{matrix}
    \cos(\frac{\pi}{4} + \frac{\epsilon_{2}}{2}) & \sin(\frac{\pi}{4} + \frac{\epsilon_{2}}{2}) \\
    \sin(\frac{\pi}{4} + \frac{\epsilon_{2}}{2}) & -\cos(\frac{\pi}{4} + \frac{\epsilon_{2}}{2}) \\
  \end{matrix} \right), 
\end{eqnarray*}  
respectively, where $\epsilon_{1}$, $\epsilon_{2}$ are constants. Let $t_{0}, \ldots, t_{7}$ be the diagonal elements of the resulting density matrix $\rho_{E}$. Since the players perform their measurements in the computational basis, the probability $P_{\delta}(x_{1}, x_{2}, x_{3})$ of the event that they obtain outcomes corresponding to a correct answer to the question $(x_{1}, x_{2}, x_{3})$ can be simply computed as a sum of some of the numbers $t_{0}, \ldots, t_{7}$.

If $x_{1} = 0$, the players actually perform the quantum winning strategy for the Mermin-GHZ game with a systematic error only in the Hadamard transformation. 

Otherwise, the players win if and only if they measure one of the states $$|001\rangle, |010\rangle, |100\rangle, |111\rangle.$$ A straightforward but slightly tedious computation yields $$t_{1} = t_{2} = t_{4} = t_{7} = \frac{1}{4}(\frac{1}{2} + \sin(\frac{\pi}{4} + \frac{\epsilon_{2}}{2})\cos(\frac{\pi}{4} + \frac{\epsilon_{2}}{2})\cos\epsilon_{1}).$$ 

It follows that
\begin{eqnarray*}
               P_{\epsilon}(x_{1}, x_{2}, x_{3}) = 
                  \left\{ \begin{array}{ll}
                      \frac{1}{2} + \sin(\frac{\pi}{4} + \frac{\epsilon_{2}}{2})\cos(\frac{\pi}{4} + \frac{\epsilon_{2}}{2}) \ & \ \text{if $x_{1} = 0$}\\
                      \frac{1}{2} + \sin(\frac{\pi}{4} + \frac{\epsilon_{2}}{2})\cos(\frac{\pi}{4} + \frac{\epsilon_{2}}{2})\cos\epsilon_{1} \ & \text{otherwise}\\
                  \end{array} \right.
\end{eqnarray*}
and $$P_{\epsilon} = \frac{1}{2}(1 + \sin(\frac{\pi}{4} + \frac{\epsilon_{2}}{2})\cos(\frac{\pi}{4} + \frac{\epsilon_{2}}{2})(1 + \cos\epsilon_{1})).$$
\end{proof}
\begin{figure}[h]
    \centering
    \includegraphics[width = 7cm]{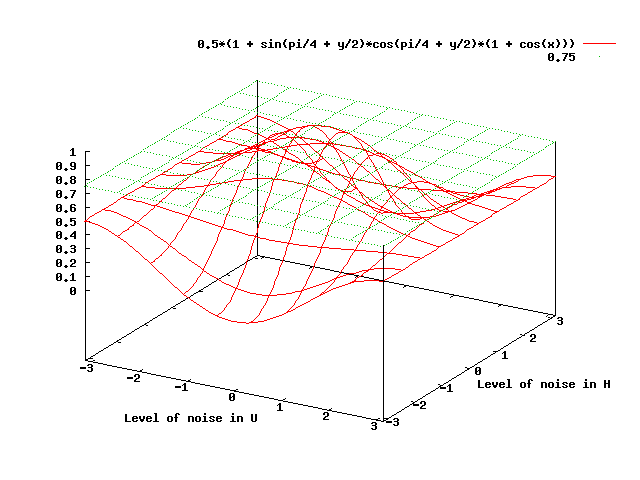}
    \includegraphics[width = 7cm]{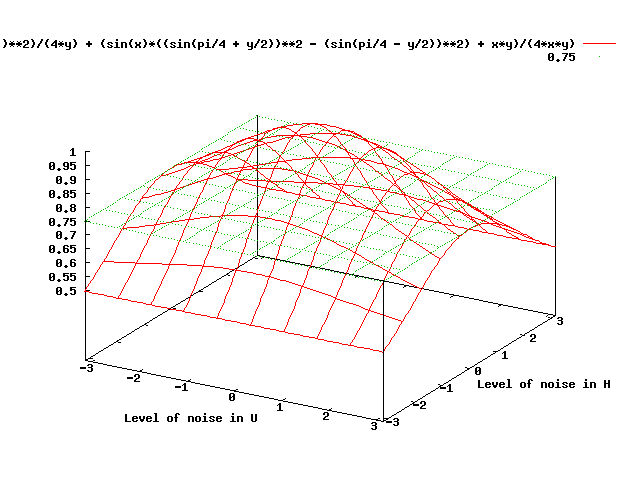}
\caption{Dependence of the success of quantum players on the level of unitary noise in the presence of a systematic (the graph on the left) and a random error made by one player in both unitary transformation.}     
\end{figure}
\subsection{Random errors}
\begin{theorem}
If Alice performs both transformations $U$, $H$ imperfectly with random errors with bounds $\delta_{1}$, $\delta_{2}$, respectively, then the quantum winning strategy for the Mermin-GHZ game gives a correct answer with probability
\begin{eqnarray*}
               P_{\delta}(x_{1}, x_{2}, x_{3}) = 
                  \left\{ \begin{array}{ll}
                      \frac{1}{2\delta}(\delta + \sin^{2}(\frac{\pi}{4} + \frac{\delta}{2}) - \sin^{2}(\frac{\pi}{4} - \frac{\delta}{2})) \ & \ \text{if $x_{1} = 0$}\\
                      \frac{sin\delta_{1}(\sin^{2}(\frac{\pi}{4} + \frac{\delta_{2}}{2}) - \sin^{2}(\frac{\pi}{4} - \frac{\delta_{2}}{2})) + \delta_{1}\delta_{2}}{2\delta_{1}\delta_{2}}
                       \ & \text{otherwise}\\
                  \end{array} \right.
\end{eqnarray*}
for any inputs $x_{1}$, $x_{2}$ and $x_{3}$ satisfying the promise. It gives a correct answer with probability $$P_{\delta} = \frac{1}{4}(1 + \frac{\sin^{2}(\frac{\pi}{4} + \frac{\delta_{2}}{2}) - \sin^{2}(\frac{\pi}{4} - \frac{\delta_{2}}{2})}{\delta_{2}} + \frac{sin\delta_{1}(\sin^{2}(\frac{\pi}{4} + \frac{\delta_{2}}{2}) - \sin^{2}(\frac{\pi}{4} - \frac{\delta_{2}}{2})) + \delta_{1}\delta_{2}}{\delta_{1}\delta_{2}}$$ for a question chosen uniformly and randomly from P.
\end{theorem}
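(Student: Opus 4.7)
The plan is to reduce everything to the systematic-error result already proved (Theorem 5) and then average over the uniform distributions on $[-\delta_1,\delta_1]$ and $[-\delta_2,\delta_2]$. Because the two random errors are chosen independently, the probability of a correct answer is the expectation of $P_\epsilon(x_1,x_2,x_3)$ from Theorem 5, i.e.
\[
P_\delta(x_1,x_2,x_3) \;=\; \frac{1}{4\delta_1\delta_2}\int_{-\delta_1}^{\delta_1}\!\!\int_{-\delta_2}^{\delta_2} P_\epsilon(x_1,x_2,x_3)\,\mathrm{d}\epsilon_2\,\mathrm{d}\epsilon_1.
\]
This formula is the only genuinely new input; everything else is a computation.

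First I would split on whether $x_1=0$. If $x_1=0$, Theorem 5 says $P_\epsilon$ does not depend on $\epsilon_1$ and equals $\tfrac12+\sin(\tfrac{\pi}{4}+\tfrac{\epsilon_2}{2})\cos(\tfrac{\pi}{4}+\tfrac{\epsilon_2}{2})$, so the $\epsilon_1$-integral contributes $2\delta_1$ and cancels, leaving exactly the single integral already evaluated in the proof of Theorem 4. This immediately gives the top branch $\tfrac{1}{2\delta_2}\bigl(\delta_2+\sin^2(\tfrac{\pi}{4}+\tfrac{\delta_2}{2})-\sin^2(\tfrac{\pi}{4}-\tfrac{\delta_2}{2})\bigr)$ advertised in the statement.

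If $x_1=1$, the integrand $\tfrac12+\sin(\tfrac{\pi}{4}+\tfrac{\epsilon_2}{2})\cos(\tfrac{\pi}{4}+\tfrac{\epsilon_2}{2})\cos\epsilon_1$ factors as a constant plus a product of a function of $\epsilon_1$ and a function of $\epsilon_2$, so the double integral separates. The $\epsilon_1$ part is the elementary $\int_{-\delta_1}^{\delta_1}\cos\epsilon_1\,\mathrm{d}\epsilon_1=2\sin\delta_1$, and the $\epsilon_2$ part is exactly the integral computed in Theorem 4, namely $\sin^2(\tfrac{\pi}{4}+\tfrac{\delta_2}{2})-\sin^2(\tfrac{\pi}{4}-\tfrac{\delta_2}{2})$. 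Combining these and clearing denominators yields the second branch of the piecewise formula.

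Finally, for the averaged quantity $P_\delta$, I would note that the promise consists of the four triples $(0,0,0),(0,1,1),(1,0,1),(1,1,0)$, of which exactly half satisfy $x_1=0$ and half satisfy $x_1=1$. Averaging the two branches with equal weight $\tfrac12$ and regrouping gives the claimed expression. There is no real obstacle here: the independence of $\epsilon_1$ and $\epsilon_2$ makes the double integral factor, and both one-dimensional integrals are either trivial or already done in earlier theorems — the entire proof is a careful bookkeeping exercise on top of Theorem 5 and the integral evaluation used in Theorem 4.
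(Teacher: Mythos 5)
Your proposal is correct and follows essentially the same route as the paper: both reduce the random-error case to the systematic-error result of the preceding theorem via the double integral $\frac{1}{4\delta_1\delta_2}\int_{-\delta_1}^{\delta_1}\int_{-\delta_2}^{\delta_2}P_\epsilon(x_1,x_2,x_3)\,\mathrm{d}\epsilon_2\,\mathrm{d}\epsilon_1$ and then evaluate. You merely fill in the integration details (the factorization of the integrand, $\int_{-\delta_1}^{\delta_1}\cos\epsilon_1\,\mathrm{d}\epsilon_1=2\sin\delta_1$, and the fifty-fifty split of the promise on $x_1$) that the paper leaves implicit behind ``it follows that.''
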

\begin{proof}
Let Alice, Bob and Charles be given bits $x_{1}, x_{2}, x_{3}$, respectively, such that $\sum_{i = 1}^{3}x_{i}$ is divisible by $2$. Suppose that while Bob and Charles perform all the steps required by the quantum winning strategy for the Mermin-GHZ game correctly, Alice deviates in the first and the second step from the winning strategy by performing the unitary transformations 
\begin{eqnarray*}
\left. \begin{matrix}
U_{\epsilon_{1}}(|0\rangle) = |0\rangle, \\
U_{\epsilon_{1}}(|1\rangle) = e^{(\frac{\pi}{2} + \epsilon_{1}) \imath x_{i}}|1\rangle \\
\end{matrix} \right.,~~
H_{\epsilon_{2}} = \left( \begin{matrix}
    \cos(\frac{\pi}{4} + \frac{\epsilon_{2}}{2}) & \sin(\frac{\pi}{4} + \frac{\epsilon_{2}}{2}) \\
    \sin(\frac{\pi}{4} + \frac{\epsilon_{2}}{2}) & -\cos(\frac{\pi}{4} + \frac{\epsilon_{2}}{2}) \\
  \end{matrix} \right), 
\end{eqnarray*}  
respectively, where $\epsilon_{1}$, $\epsilon_{2}$ are chosen uniformly and randomly from the intervals $[-\delta_{1}, \delta_{1}]$, $[-\delta_{2}, \delta_{2}]$, respectively. The probability $P_{\delta}(x_{1}, x_{2}, x_{3})$ of the event that they obtain outcomes corresponding to a correct answer to the question $(x_{1}, x_{2}, x_{3})$ can be computed as $$P_{\delta}(x_{1}, x_{2}, x_{3}) = \frac{1}{4\delta_{1}\delta_{2}}\int\limits_{-\delta_{2}}^{\delta_{2}}\int\limits_{-\delta_{1}}^{\delta_{1}}P_{\epsilon}(x_{1}, x_{2}, x_{3})\, \mathrm{d} \epsilon_{1} \, \mathrm{d} \epsilon_{2}$$ where $P_{\epsilon}(x_{1}, x_{2}, x_{3})$ is the probability of a correct answer in the presence of a systematic error in both transformations. 

It follows that
\begin{eqnarray*}
               P_{\delta}(x_{1}, x_{2}, x_{3}) = 
                  \left\{ \begin{array}{ll}
                      \frac{1}{2\delta}(\delta + \sin^{2}(\frac{\pi}{4} + \frac{\delta}{2}) - \sin^{2}(\frac{\pi}{4} - \frac{\delta}{2})) \ & \ \text{if $x_{1} = 0$}\\
                      \frac{sin\delta_{1}(\sin^{2}(\frac{\pi}{4} + \frac{\delta_{2}}{2}) - \sin^{2}(\frac{\pi}{4} - \frac{\delta_{2}}{2})) + \delta_{1}\delta_{2}}{2\delta_{1}\delta_{2}}
                       \ & \text{otherwise.}\\
                  \end{array} \right.
\end{eqnarray*}
and $$P_{\delta} = \frac{1}{4}(1 + \frac{\sin^{2}(\frac{\pi}{4} + \frac{\delta_{2}}{2}) - \sin^{2}(\frac{\pi}{4} - \frac{\delta_{2}}{2})}{\delta_{2}} + \frac{\sin\delta_{1}(\sin^{2}(\frac{\pi}{4} + \frac{\delta_{2}}{2}) - \sin^{2}(\frac{\pi}{4} - \frac{\delta_{2}}{2})) + \delta_{1}\delta_{2}}{\delta_{1}\delta_{2}}.$$
\end{proof}
\section{Conclusions and Open Problems}
We have studied the impact of erroneously performed unitary transformations on the quantum winning strategy for the Mermin-GHZ game. It has turned out that these errors are able to decrease the success probability of quantum players so that they have no advantage over classical players, provided the unitary noise is sufficiently strong. We have also investigated how strong the noise can be so that quantum players would still be better than classical ones.

In this paper we have considered a fixed quantum strategy only. It would be certainly reasonable to investigate whether there is a quantum strategy which is better than the winning strategy in the presence of unitary noise and to find a bound on the success of quantum strategies. A natural direction of the future research is also to examine the impact of unitary noise on generalizations of the Mermin-GHZ game called Mermin's parity game \cite{eqeiasomds} and the extended parity game \cite{caqnl}. Another open question of interest is how unitary noise influences other pseudo-telepathy games.
\bibliographystyle{eptcs}

\end{document}